\newtheorem{lemma}{Lemma}
\definecolor{lightgreen}{RGB}{220,255,220}
\definecolor{lightred}{RGB}{255,230,230}
\definecolor{lightyellow}{RGB}{255,255,200}
\title{
    \textbf{F\&O Expiry vs. First-Day SIPs:} \\
    \textbf{A 22-Year Analysis of Timing Advantages in India's Nifty 50} 
}
\author{
    Siddharth Gavhale \\
    School of Interdisciplinary Studies \& Research \\
    D Y Patil International University \\
    Akurdi, Pune 411044, India \\
    \texttt{siddharth.gavhale@dypiu.ac.in}
}
\date{}
\begin{document}
\maketitle
\begin{abstract}
Systematic Investment Plans (SIPs) are a primary vehicle for retail equity participation in India, yet the impact of their intra-month timing remains underexplored. This study offers a 22-year (2003–2024) comparative analysis of SIP performance in the Nifty 50 index, contrasting the conventional first-trading-day (FTD-SIP) strategy with an alternative aligned to monthly Futures and Options expiry days (EXP-SIP). Using a multi-layered statistical framework—including non-parametric tests, effect size metrics, and stochastic dominance—we uncover two key findings. First, EXP-SIPs outperform FTD-SIPs by 0.5–2.5\% annually over short-to-medium-term horizons (1–5 years), with Second-Order Stochastic Dominance (SSD) confirming the EXP-SIP as the preferred choice for all risk-averse investors. Second, we establish boundary conditions for this advantage, showing it decays and becomes economically negligible over longer horizons (10–20 years), where compounding and participation dominate. Additionally, the study challenges the prevalent “12\% return” narrative in Indian equity markets, finding that the 20-year pre-tax CAGR for Nifty 50 SIPs is closer to 6.7\%. These findings carry significant implications for investor welfare, financial product design, and transparency in return reporting.
\end{abstract}

% Keywords
\noindent \textbf{Keywords:} SIP returns, Nifty 50, F\&O expiry, systematic investment plan, Indian stock market, investment strategies.

% JEL Codes
\noindent \textbf{JEL Codes:} C58, G11, G12, G14, G41.

\vspace{0.5cm} \hrule \vspace{0.5cm}
\section{Introduction}
Systematic Investment Plans (SIPs) have emerged as a cornerstone of retail investment in India's equity markets, offering a structured and disciplined path to long-term wealth creation. Designed to reduce the emotional and cognitive burden of investment decisions, SIPs enable individuals to invest fixed amounts at regular intervals, thereby benefiting from rupee-cost averaging and compounding returns over time. This approach is particularly appealing in India--a rapidly growing economy with a rising middle class seeking accessible wealth-creation tools \cite{WorldBank2021, RBI2022}. Among the available investment avenues, the Nifty 50 Index has emerged as a widely adopted benchmark for systematic investment plans (SIPs) in India. The index comprises 50 large-cap companies across 13 sectors and is considered a proxy for the Indian equity market and the broader economy \cite{nifty_methodology}.
According to recent data from the Association of Mutual Funds in India (AMFI), as of February 2025, more than 100 million investors participate in SIPs. Among those, 4.456 million are new SIP accounts, further contributing approximately INR 25,999 crore monthly to mutual funds \cite{amfi_feb2024}. This marks a significant rise from early 2021, underscoring the growing popularity of the SIP route among retail investors \cite{amfi_historical}.
Despite their popularity, a critical dimension of SIPs remains underexamined: the intra-month timing of execution. A vast number of SIPs are scheduled by default for the first trading day of the month, a convention likely driven by salary credit cycles. However, this uniformity may inadvertently expose a majority of investors to systematically unfavorable market entry points if predictable microstructure effects are present.

One such predictable pattern occurs around monthly Futures and Options (F\&O) expiry days—typically the last Thursday of each month—which are marked by elevated trading volumes and short-term price dislocations. 
Prior research documents significant volume spikes and systematic price dips just before expiry, often followed by sharp rebounds driven by institutional arbitrage and liquidity constraints in the cash market \cite{Vipul2005}. This raises a compelling question: can aligning SIP execution with this predictable expiry-day volatility yield superior outcomes?
While earlier studies confirm the long-run superiority of SIPs over lump-sum strategies \cite{Venkataramani2021}, they do not explore the tactical potential of intra-month timing. This gap persists in both academic and practitioner discourse. Consequently, whether expiry-related volatility can be harnessed as a source of alpha for systematic investors remains an open empirical question.

This paper addresses this gap through a 22-year comparative analysis (2003–2024) of two SIP timing strategies using the Nifty 50 index: (i) the FTD-SIP, executed on the first trading day of the month, and (ii) the EXP-SIP, executed on the monthly F\&O expiry day. We evaluate relative performance across a range of horizons and uncover three key findings:
\begin{enumerate}
    \item \textit{A Robust Timing Advantage Exists:} The EXP-SIP strategy generates a statistically and economically significant outperformance over the FTD-SIP strategy in 1- to 5-year horizons, proving that intra-month timing is not irrelevant.
    \item \textit{The Advantage is Horizon-Dependent:} The nature of the outperformance is dynamic. We demonstrate that the EXP-SIP strategy achieves Second-Order Stochastic Dominance, making it preferable for any risk-averse investor, but the drivers of this advantage shift as the investment horizon lengthens.
    \item \textit{Long-Term Returns are Widely Misrepresented:} Our data directly challenges the widely marketed \text{12–15\% return} narrative. The actual 20-year pre-tax CAGR for the Nifty 50 is closer to 6.7\%, revealing a critical disconnect between industry claims and empirical reality.
\end{enumerate}
Notably, the tactical advantage of the EXP-SIP strategy diminishes over multi-decade horizons, reinforcing the foundational principle that the discipline of consistent, uninterrupted participation remains the dominant driver of long-term wealth.
The remainder of this paper is structured as follows: Section 2 reviews theoretical and empirical literature on SIP performance and timing. Section 3 outlines the methodology applied to 22 years of Nifty 50 index data. Section 4 presents theoretical and empirical results comparative performance results across different investment durations. Section 5 discusses broader implications and limitations of the findings, and Section 6 concludes the study.
\section{Literature review}
Systematic Investment Plans sit at the intersection of classical finance theory, behavioral economics, and market microstructure. From the classical lens, SIPs derive legitimacy from the Efficient Market Hypothesis (EMH), which posits that asset prices incorporate all publicly available information, thereby rendering consistent market timing futile \cite{Fama1970}. This perspective, popularized by Malkiel’s \textit{Random Walk }theory \cite{Malkiel1989}, positions SIPs as tools to ensure \textit{time in the market} rather than vehicles for superior timing. The underlying mechanism of SIPs—Dollar-Cost Averaging (DCA)—is thus viewed as a discipline-enforcing strategy rather than a return-enhancing one.
However, classical finance has long criticized DCA as suboptimal. Constantinides \cite{Constantinides1979} argued that DCA systematically withholds capital from the market, causing investors to forgo the equity risk premium they would earn under lump-sum investing. Similarly, Carhart \cite{Carhart1997} demonstrated that even professional active fund managers struggle to outperform passive benchmarks, reinforcing the EMH-based view that timing strategies—even systematic ones—offer limited value under rational expectations.

This deterministic view has been meaningfully challenged by behavioral finance. Kahneman and Tversky’s seminal Prospect Theory \cite{Kahneman1979} established that investor behavior often deviates from rational utility maximization due to cognitive biases such as loss aversion and framing effects. Retail investors, in particular, are known to exhibit herd behavior, recency bias, and the disposition effect—the tendency to prematurely sell winners and hold onto losers \cite{Odean1998}. Barber, Odean, and Zhu \cite{Barber2009} found that retail investor trades are systematically correlated and that the most aggressively purchased stocks tend to underperform in subsequent periods, indicating persistent timing errors. Within this behavioral paradigm, SIPs function as commitment devices—predefined investment rules that help investors sidestep their own cognitive pitfalls and emotional biases.
Yet, even this behavioral framing overlooks a deeper layer of analysis: the predictable frictions arising from market microstructure. Markets are not perfectly efficient or random; they are shaped by the institutional mechanics of trading, settlement, and liquidity flows. One particularly relevant friction is \text{flow-driven price pressure}, wherein coordinated, inelastic demand temporarily distorts asset prices. Coval and Stafford \cite{Coval2007} demonstrated that large, predictable flows from mutual funds exert temporary upward pressure on prices, leading to subsequent underperformance. Edelen, Ince, and Kadlec \cite{Edelen2016} further confirmed that such flow-induced distortions raise transaction costs and degrade long-term returns.
This friction is directly relevant to the Indian SIP landscape, where inflows are heavily concentrated on the initial trading day of each month due to payroll cycles \cite{SEBI2023}. Retail investors, acting in aggregate, create a predictable surge of demand at month-beginning, generating a temporary price inflation that long-term SIP investors unwittingly absorb. This \textit{salary-day effect} can be viewed as a structural inefficiency that systematically penalizes disciplined retail investors by forcing purchases into periods of peak demand and elevated valuations.

Juxtaposed against this retail-driven inefficiency is another calendar-based anomaly: the derivatives expiry effect. On the last Thursday of each month, the Indian equity market experiences heightened volatility and price dislocations due to the unwinding of futures and options (F\&O) positions. As options near expiration, their delta becomes increasingly sensitive to changes in the underlying asset—an effect known as high gamma. This creates strong incentives for institutional players to “pin” prices near high open interest strike levels, thereby maximizing the value of their derivatives positions \cite{Stoll1991,  Ni2005}. Vipul \cite{Vipul2005} provided early empirical evidence of such expiry-driven volatility and price clustering in the Nifty spot market, attributing it to both gamma hedging by market makers and deliberate strategic behavior by institutions.

Recent events provide a stark illustration of such institutional manipulation of market structure. In mid 2025, Securities and Exchange Board of India (SEBI) penalized the global quantitative firm Jane Street for using a 'netting-off' strategy to exploit regulatory loopholes and engineer favorable price movements \cite{JaneStreet2025}. This incident serves as a powerful real-world example of the strategic exploitation described in the academic literature. While often associated with the high-frequency trading (HFT) 'arms race' to exploit microsecond latencies \cite{Budish2015}, the underlying principle is broader: sophisticated institutions possess the capability and incentive to manipulate market mechanics for profit. The Jane Street case is not merely an HFT anomaly; it is crucial evidence that lends significant credence to our hypothesis that similar strategic behavior is employed in the cash market to influence the outcome of large derivatives positions on expiry day.
Taken together, these two effects—the retail-driven price inflation on salary day and the institutionally-driven dislocations on expiry day—create a dichotomous structure within each calendar month. The first represents coordinated, unsophisticated buying pressure; the second, strategic price manipulation by elite players. To the best of our knowledge, no empirical study has specifically evaluated whether SIP investors can enhance long-term outcomes by adjusting the intra-month timing of their investments.

This paper addresses that gap. Building on established insights from EMH, behavioral finance, and market microstructure theory, we formulate and test a novel hypothesis: that an expiry-day SIP strategy—executed on the last Thursday of each month—can outperform the first-day SIP by systematically exploiting price dislocations created by institutional trading dynamics, while avoiding retail-driven price pressure. Through a robust, long-term empirical analysis of Nifty 50 indic returns in India, we evaluate the comparative effectiveness of these two SIP strategies across multiple investment horizons.

%%
%Methodology 
\section{Methodology}
This study evaluates the impact of timing and duration on the performance of Systematic Investment Plans (SIPs) by simulating all returns using historical daily closing prices of the Nifty 50 index over the 2003–2024 period extracted from  NSE India \cite{NiftyHistoricalData}. This window is selected to ensure data completeness following the introduction of stock and index derivatives (Futures \& Options) in India during 2000–2001\cite{NSEHistory}. Although direct investment in the index is not possible, the Nifty 50 serves as a practical proxy for passive market exposure and is widely used as a benchmark by index funds and ETFs \cite{Agapova2011}. To ensure realistic and replicable calculations, daily closing prices are used for all return computations, as intraday highs and lows are not consistently accessible and are operationally impractical for SIP execution—even for high-frequency systems \cite{Virgilio2019}. All SIPs are simulated using calendar-year windows (January to December), maintaining uniformity across scenarios. A detailed mapping of monthly first trading days and F\&O expiry dates used for investment timing is provided in Appendix Table \ref{tab:ftd_exp}.

\subsection{Investment Durations}
To assess the influence of investment horizon on SIP performance, five distinct and non-overlapping durations are analyzed within the 2003--2024 window. These durations are structured to represent varying investor preferences, from short-term tactical strategies to long-term wealth accumulation:

\begin{itemize}
    \item \text{1-Year SIPs:} Evaluated for each calendar year from 2003 to 2024, with 12 equal monthly contributions beginning in January and concluding in December of the same year.
    
    \item \text{3-Year SIPs:} Assessed over seven triennial intervals: 2004--2006, 2007--2009, 2010--2012, 2013--2015, 2016--2018, 2019--2021, and 2022--2024.
    
    \item \text{5-Year SIPs:} Simulated over four quinquennial periods: 2005--2009, 2010--2014, 2015--2019, and 2020--2024.
    
    \item \text{10-Year SIPs:} Represented by two decadal windows: 2005--2014 and 2015--2024.
    
    \item \text{20-Year SIP:} Captures a full investment cycle spanning the 2005--2024 period.
\end{itemize}

This structured segmentation enables consistent evaluation across timeframes while ensuring that all plans conclude by 2024. It also facilitates comparative analysis of return profiles across short-, medium-, and long-term horizons, aligning with varied investor goals.

\subsection{Investment Timing}
Beyond duration, the specific timing of monthly investments is a critical determinant of SIP returns due to market volatility and price fluctuations within each month. This study compares two practical SIP execution strategies:
\begin{itemize}
    \item \text{First Trading Day SIP (FTD-SIP):} Contributions are made on the first trading day of each calendar month. This approach reflects conventional SIP scheduling, where investments are executed early in the month, potentially allowing longer compounding for each installment.

    \item \text{F\&O Expiry Day SIP (EXP-SIP):} Contributions occur on the F\&O expiry date of the previous month, typically the last Thursday (or adjusted accordingly for holidays). This timing is chosen to explore whether market adjustments or end-of-month volatility offer any advantage.
\end{itemize}

\text{Illustrative Example:}  Consider a 1-Year SIP for the year 2010. In the FTD-SIP strategy, monthly investments are made on the first trading days of January through December 2010, with the plan maturing on December 31, 2010. In contrast, the EXP-SIP version invests on the F\&O expiry dates of the preceding month---e.g., December 31, 2009 (for January 2010), January 28, 2010 (for February), and so on---while maintaining the same maturity date.
\subsection{Return Computation and Statistical Evaluation}
To maintain analytical tractability, the study assumes direct exposure to the Nifty 50 index, thereby abstracting from complexities such as fund-level tracking errors or expense ratios. Let \( m \) represent the fixed monthly SIP contribution over an investment period of \( N \) years, with \( N \in \mathbb{N} \). The following notations are used for return computation:

\begin{itemize}
    \item \( I^F_i \): Closing index value on the first trading day of the \( i^\text{th} \) month.
    \item \( I^E_i \): Closing index value on the F\&O expiry day of the \( (i-1)^\text{th} \) calendar month.
    \item \( I^k_i \): Generalized index close on the investment day, where \( k \in \{F, E\} \) corresponds to the first trading day or expiry day strategy.
    \item \( q^k_i = \frac{m}{I^k_i} \): Units purchased in month \( i \) under strategy \( k \).
\end{itemize}

The total units accumulated across the entire investment duration is given by:
\[
    Q^k = \sum_{i=1}^{12N} q^k_i
\]

The final value of the investment at the end of the horizon is:
\begin{equation}
    F^k = I^L_{12N} \cdot Q^k,
    \label{eq:final_value}
\end{equation}
where \( I^L_{12N} \) denotes the index closing value on the last trading day of December in the terminal year. Specific maturity dates are listed in Table~\ref{tab:end_date}.
\begin{table}[htbp]
\centering
\setlength{\tabcolsep}{5pt}
\renewcommand{\arraystretch}{1.0}
\begin{tabular}{|l|l|l|l|l|l|l|l|} \hline
31/12/2024 & 29/12/2023 & 30/12/2022 & 31/12/2021 & 31/12/2020 & 31/12/2019 & 31/12/2018 & 29/12/2017 \\ \hline
30/12/2016 & 31/12/2015 & 31/12/2014 & 31/12/2013 & 31/12/2012 & 30/12/2011 & 31/12/2010 & 31/12/2009 \\ \hline
31/12/2008 & 31/12/2007 & 29/12/2006 & 30/12/2005 & 31/12/2004 & 31/12/2003 &             &            \\ \hline
\end{tabular}
\caption{Last trading date of year (2003-2024)}
\label{tab:end_date}
\end{table}

To evaluate investment performance, the Compound Annual Growth Rate (CAGR) is used. CAGR reflects the annualized rate of return assuming reinvestment and compounding. While XIRR is widely used for SIP returns, its reliance on internal rate of return calculations introduces the possibility of multiple solutions under non-conventional cash flow patterns \cite{WikipediaIRR}, a challenge acknowledged in both academic and practitioner literature \cite{Damodaran2007}. Additionally, absolute return, though simple, fails to capture investment duration, making it less suitable for comparing long-term performance \cite{FundsIndiaAbsoluteReturn}. The CAGR is calculated using:
\begin{equation}
   \text{CAGR} = \left[\left( \frac{\text{Final value}}{\text{Total invested}} \right)^{\frac{1}{N}} - 1 \right] \times 100
   = \left[\left( \frac{F^k}{12mN} \right)^{\frac{1}{N}} - 1 \right] \times 100
   \label{eq:cagr}
\end{equation}
For each of the 36 investment plans across all durations and schedules (FTD and EXP), we compute two separate CAGR values—\textit{CAGR\_f} and \textit{CAGR\_e}—to compare the impact of timing on final returns. This allows us to examine whether entry-day strategy has a systematic influence on performance, particularly in the presence of market volatility and cyclicality.

\subsection{Statistical Evaluation Framework}

In this work, empirical analysis employs a multilayered statistical framework designed to comprehensively evaluate the performance differential between EXP-SIP and FTD-SIP strategies. This approach systematically addresses three fundamental questions: whether the observed outperformance is statistically significant, whether it is economically meaningful, and whether it persists across the entire return distribution for various investor types. By integrating complementary analytical techniques, we ensure robust conclusions that are neither artifacts of specific statistical assumptions nor limited to particular performance metrics.
The evaluation begins with paired hypothesis testing to establish baseline statistical significance. 
We first implement a conventional one-tailed paired t-test of the null hypothesis that the expiry-day strategy does not outperform 
($H_0: \mu_{\text{EXP}} - \mu_{\text{FTD}} \leq 0$) against our alternative hypothesis of systematic advantage ($H_1: \mu_{\text{EXP}} - \mu_{\text{FTD}} > 0$). 
However, recognizing the well-documented non-normality characteristic of financial returns \cite{Cont2001},
we complement this parametric test with the non-parametric Wilcoxon signed-rank procedure. 
This dual-test approach ensures our inferences about central tendency are robust to distributional assumptions, 
heavy tails, or outlier effects that frequently appear in financial datasets \cite{Wilcox2017}.

While hypothesis testing establishes statistical significance, it provides limited insight into the practical importance of any observed differences. Therefore, the analysis further supplemented with multiple complementary measures of economic significance. 
The standardized effect size, measured through both Cohen's $d$ and its small-sample adjusted counterpart Hedges' $g$, quantifies the mean difference in units of variability, allowing comparison across different investment horizons.
Following conventional benchmarks \cite{Cohen1988, Hedges1981}, we interpret absolute values exceeding 0.2 as economically meaningful, with values above 0.5 representing substantial effects and values above 0.8 representing large postive effect. This multi-metric assessment prevents overreliance on any single measure of practical significance while providing intuitive interpretation of the results' investment implications.
To account for potential small-sample biases and distributional irregularities, we implement a non-parametric bootstrap procedure that makes minimal assumptions about the underlying data generating process. Drawing 10,000 resamples of the paired differences, we construct bias-corrected and accelerated (BCa) confidence intervals that adjust for both skewness and median bias \cite{Efron1986}. This computationally intensive approach yields second-order accurate confidence limits with superior coverage probabilities compared to conventional methods, providing reliable inference regardless of the true distributional characteristics of SIP returns.

The most theoretically grounded component of our analysis evaluates stochastic dominance relationships between the strategies' return distributions. Using the consistent test procedure developed by \cite{Davidson2000}, we examine whether the expiry-day strategy demonstrates first-order stochastic dominance (preferred by all rational investors) or second-order stochastic dominance (preferred by all risk-averse investors). These tests move beyond comparisons of central tendency to consider the entire distribution of outcomes, while properly accounting for the statistical dependence between the paired strategies. The resulting dominance relationships provide the strongest form of comparative inference, directly connecting our empirical results to theoretical models of investor choice under uncertainty.
This integrated analytical framework offers several substantive advantages over conventional single-method approaches. The combination of parametric and non-parametric tests provides robustness against distributional misspecification, while the effect size measures bridge the gap between statistical and economic significance. The bootstrap analysis yields reliable small-sample inference, and the dominance tests ground our findings in economic theory. Together, these components form a comprehensive assessment that addresses both the statistical properties and practical implications of the observed performance differences, providing actionable insights for both retail investors and financial professionals considering alternative SIP timing strategies. All statistical analyses are implemented in MatLab using reproducible codebases, ensuring transparency and replicability of findings.

%%

%Result and Discussion
\section{Results and Analysis}
\subsection{Theoretical}
The first theoretical result of this article builds upon a well-established understanding in finance: the return of an index fund is determined by the performance of the underlying index and the duration of the investment, rather than the amount invested\cite{Bodie2014 , Damodaran2012 , Sharpe1999  }. This holds true for Systematic Investment Plans (SIPs) as well, where the timing and performance of the underlying index, rather than the invested amount, influence the Compound Annual Growth Rate (CAGR).
To understand this in detail, consider the following lemma, which formalizes this concept in the context of SIP investments.
\begin{lemma}
\textit{(CAGR Dependence on Index Trajectory and Duration)}  
The Compound Annual Growth Rate (CAGR) of a SIP investment assuming direct exposure to the Nifty 50 index is independent of the invested amount. It depends solely on the investment duration and the trajectory of the index over that period.
\end{lemma}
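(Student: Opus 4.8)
The plan is to establish the claim by direct substitution into the CAGR formula of Equation~\eqref{eq:cagr}, exploiting the observation that every quantity feeding its numerator is proportional to the monthly contribution $m$. First I would expand the terminal value $F^k$ by chaining Equation~\eqref{eq:final_value} with the definitions $Q^k=\sum_{i=1}^{12N} q^k_i$ and $q^k_i=m/I^k_i$, which yields
\begin{equation*}
F^k = I^L_{12N}\,Q^k = m\, I^L_{12N}\sum_{i=1}^{12N}\frac{1}{I^k_i}.
\end{equation*}
This exhibits $F^k$ as $m$ times a factor that depends only on the realized index values on the investment days and on the terminal close; the total amount invested, $12mN$, is of course also proportional to $m$.

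The second step is to form the ratio that actually appears inside the CAGR expression and cancel the common factor $m$:
\begin{equation*}
\frac{F^k}{12mN}=\frac{I^L_{12N}}{12N}\sum_{i=1}^{12N}\frac{1}{I^k_i}.
\end{equation*}
Substituting this back into Equation~\eqref{eq:cagr} produces a closed form for the CAGR that involves only the horizon $N$, the index closes on the $12N$ investment days $\{I^k_i\}$, and the terminal close $I^L_{12N}$ — that is, only the duration and the realized index trajectory, with no residual dependence on $m$. I would finish by noting that the hypothesis $N\in\mathbb{N}$ makes the exponent $1/N$ and the summation bound $12N$ well defined, so the formula is valid for every horizon analysed, and that the derivation is word-for-word identical for either schedule $k\in\{F,E\}$, since the strategy label enters only through which index values are sampled and never through $m$.

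There is no genuine obstacle here: the argument is essentially one line of algebra. The only point worth a sentence of care is confirming that the $m$ appearing in each $q^k_i=m/I^k_i$ is the \emph{same} constant that assembles the denominator $12mN$ — equivalently, that all $12N$ installments are of equal size — which is precisely the fixed-contribution premise built into the model. Once that is made explicit, the cancellation is immediate and the stated independence from the invested amount follows.
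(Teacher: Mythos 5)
Your proposal is correct and follows essentially the same route as the paper: substitute $q^k_i = m/I^k_i$ into the final-value expression, observe that both $F^k$ and the total invested $12mN$ scale linearly with $m$, and cancel to leave a CAGR depending only on the sampled index closes and the horizon $N$. Your version is slightly more explicit about the intermediate ratio and the equal-installment premise, but the argument is the same one-line cancellation the paper gives.
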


\begin{proof}
Put \( q^k_i = \frac{m}{I^k_i} \) in \eqref{eq:cagr} gives
\[
\text{CAGR} = \left( \left( \frac{I^{L}_{12N} \left( \sum^{12N}_{i=1} \frac{1}{I^k_i} \right)}{12N} \right)^{\frac{1}{N}} - 1 \right) \times 100
\]
As the numerator and denominator both scale linearly with \( m \), the \( m \) terms cancel, leaving CAGR dependent only on the index performance and duration, i.e, $ \text{CAGR} \propto (I^k_i, N),  \text{ and CAGR} \not\varpropto m $
\end{proof}
This lemma provides a fundamental insight about SIPs: \emph{the investment amount is irrelevant to the CAGR calculation}. Instead, the timing of investments — whether made on the first trading day (FTD) or at the expiry of options (EXP) — and the market's overall trajectory play a much more critical role. An index that shows a steady upward trend with smaller drawdowns will naturally lead to a higher CAGR \cite{MomentumEffect2020}. Therefore, selecting an optimal investment schedule and understanding long-term index behavior is more impactful than merely increasing the amount invested.
%%End of first subsection

%%
%%
\subsection{Short- and Medium-Term Investment Horizons}
This subsection synthesizes the empirical findings from multiple time horizons—ranging from 1, 3 and 5 years investment windows—based on SIP execution strategies. The Compounded Annual Growth Rate (CAGR) serves as the return metric. The data exhibits nuanced but consistent trends.

\subsubsection{One Year Investment Returns} This subsection presents a year-by-year analysis of one-year Systematic Investment Plan (SIP) returns for the Nifty 50 index between 2003 and 2024. Each year’s investment spans from January to December, using earlier mentioend two strategies. The data exhibit considerable inter-annual variability, reflecting the inherent volatility in equity markets. Years such as 2003, 2005, 2006, 2007, 2009 \& 2020,  recorded CAGRs exceeding 20\% under both strategies. In 2003, the first-day investment yielded a CAGR of 60.43\% compared to 62.94\% from expiry-based SIPs. See Table \ref{tab:result} in Appendix and Figure \ref{fig_y} for details. Conversely, bearish years such as 2008, 2011 and 2015 experienced negative returns. The 2008 financial crisis led to nearly identical CAGRs: -30.08\% for the first-day strategy and -29.99\% for expiry-day SIPs, with an insignificant difference. In 2011, both strategies again posted losses, with expiry-day investments slightly outperforming (-12.58\% vs. -14.81\%), though this difference was statistically significant. Same has repeated in 2015 with smaller difference in CAGR.
Over the entire 22-year span, the average absolute CAGR difference between the two strategies remained modest—typically within a $\pm$2.5\% range. Positive CAGR differences in favor of expiry-based investing were observed more than 80\% of the years. In 2004, 2005, and 2022, expiry-based SIPs outperformed by more than 1.55\%. 
Years like 2003, 2011, and 2013 showed CAGR differences exceeding 2\%. Meanwhile, in stable years such as \text{2015} and \text{2019}, both strategies produced nearly identical outcomes, emphasizing the limited impact of timing in less volatile periods.

 The analysis reveals that even within a relatively short one-year investment horizon, SIPs timed around the F\&O expiry date consistently delivered marginally superior returns compared to first-day-of-month investments. This suggests that expiry-based timing is not just a random anomaly but may be capturing systematic behavioral or structural market effects. While the average annual CAGR difference remained within $\pm$2.5\%, the persistent directional advantage in favor of expiry-based SIPs highlights its potential as a tactical enhancement, especially in turbulent or momentum-driven markets. Thus, even for short-term SIPs, aligning investments with expiry dates may offer a subtle yet repeatable edge - reinforcing the case for expiry-aware SIP execution strategies.
\begin{figure}[ht!]
\centering{
\includegraphics[scale=0.15]{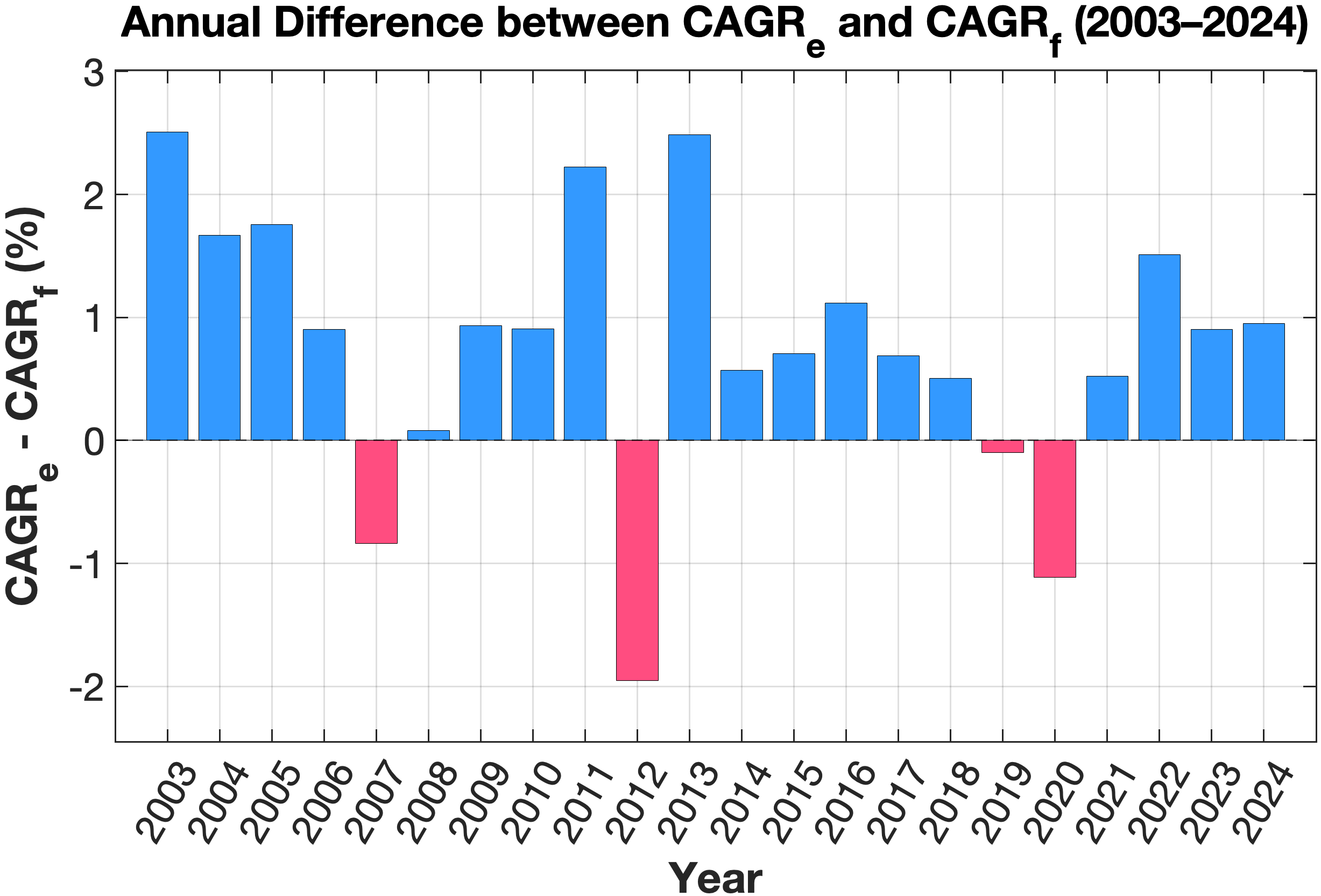}\\
\includegraphics[scale=0.078]{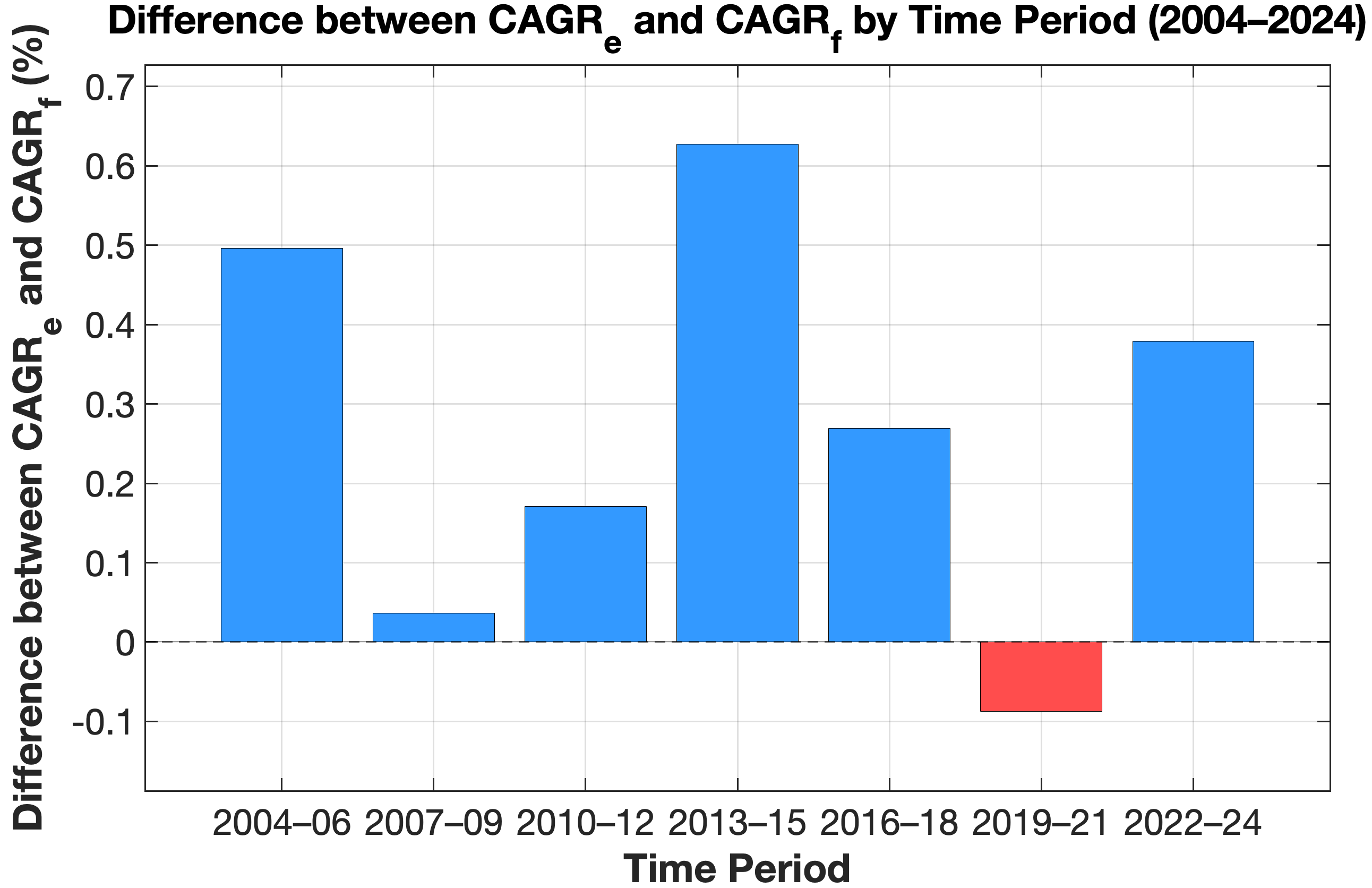}
\includegraphics[scale=0.08]{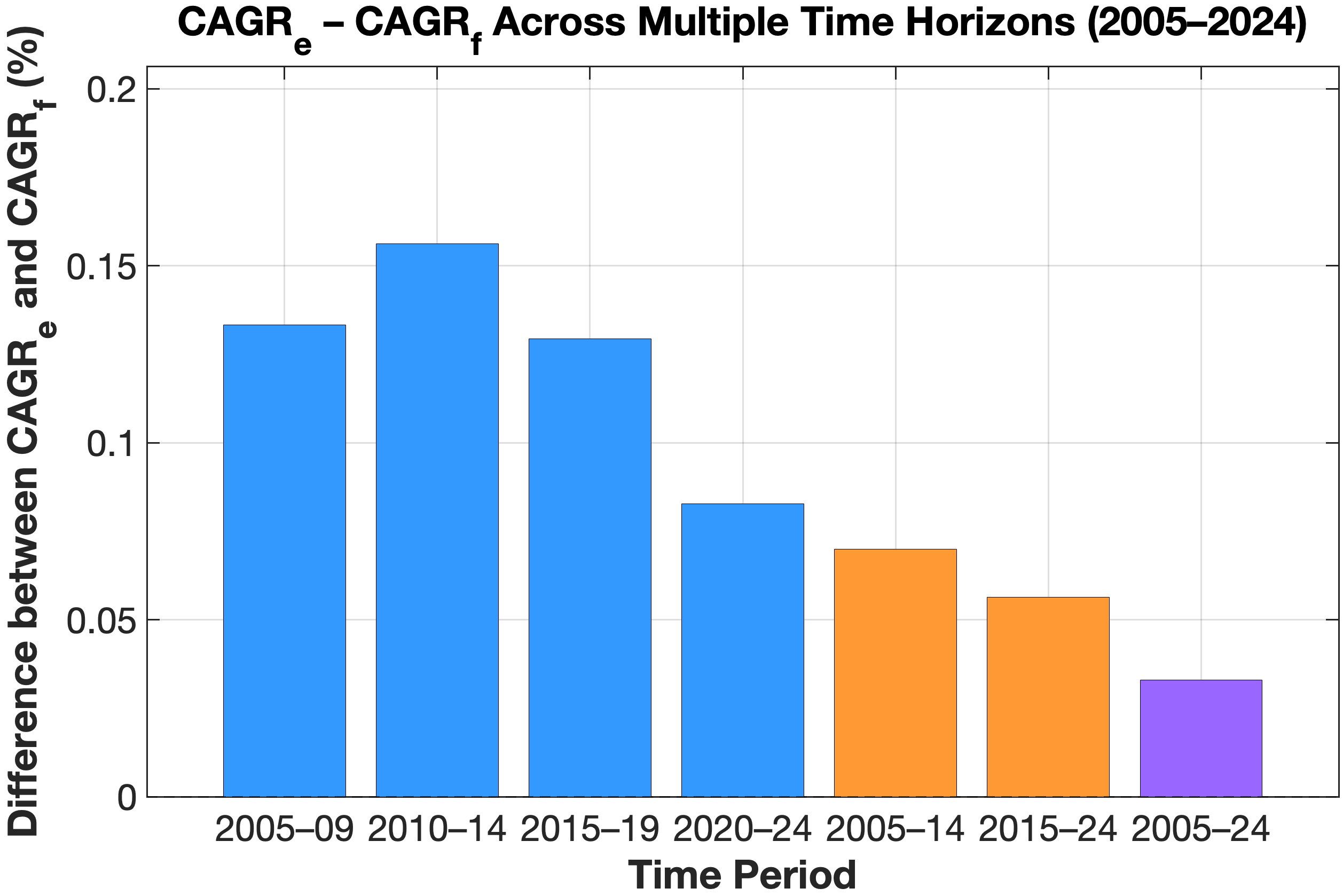}
\caption{ Year-wise difference in three-year SIP returns between EXP-SIP and FTD-SIP}
\label{fig_y} 
}
\end{figure}

 \subsubsection{Three-Year Investment Returns}
The analysis of three-year rolling SIP investments reveals a clearer picture of performance differentials between the two strategies by reducing the noise of short-term market volatility. Over the seven three-year periods from 2004–2006 to 2022–2024, both the first-day and expiry-day strategies showed reasonably strong growth, with CAGR values ranging from approximately 3\% to 21\%. Notably, the expiry-based strategy outperformed the first-day strategy in six out of seven periods, with CAGR advantages typically ranging between 0.04\% and 0.63\%, see Figure \ref{fig_y} and Table \ref{tab:result} for details. 
Importantly, unlike the more erratic one-year results, the three-year intervals display a smoother, more predictable trend, reinforcing the importance of extending investment horizons in SIP-based strategies. The observed edge in expiry-day investments during multiple periods may reflect favorable end-of-month liquidity patterns or systematic institutional behaviors around derivatives expiry that influence price levels.
The three-year investment analysis strengthens the case for expiry-based SIP timing. 
The strategy not only outperformed more frequently but also did so with statistical significance across the observed periods.
 While the absolute differences in CAGR are modest, their consistent direction and significance suggest that expiry-based timing may capture a persistent structural advantage. 
 Thus, even over medium-term horizons, aligning SIP execution with F\&O expiry may offer a subtle but repeatable edge for disciplined investors.

\subsubsection{Five Year Investment Returns}

The five-year investment horizon offers a more stable 
perspective on SIP performance by dampening
 short-term volatility and better capturing broader market cycles. 
Across the four evaluated five-year periods—2005–2009, 2010–2014, 2015–2019, and 
2020–2024—both SIP strategies generated consistent growth, with CAGR values typically ranging between 5\% and 9\%. Full results are presented in Table \ref{tab:result} and visualized in Figure \ref{fig_y}.
Importantly, the expiry-based strategy outperformed the first-day strategy in all four periods, with positive CAGR differentials ranging from approximately 0.08 to 0.16 percentage points. This consistent directional advantage lends credibility to the hypothesis that expiry-aligned SIPs benefit from favorable structural or behavioral market effects around the derivatives expiry. The earliest period (2005–2009), covering the post- Global Financial Crisis (GFC) rebound, showed the largest outperformance (9.04\% vs.\ 8.91\%).
Though the return gaps in the subsequent periods were more modest, the fact that all differences remained consistently positive suggests a subtle structural edge rather than random noise.  The five-year results robustly support the notion that expiry-day SIPs may provide a modest but persistent performance edge over traditional first-day investments. While the absolute advantage is small, its consistency—especially during volatile recovery phases—strengthens the argument for expiry-aware SIP execution. This finding affirms that even over longer horizons, intelligent timing can act as a complementary force to disciplined investing.
\begin{figure}[ht!]
\centering{
\includegraphics[scale=0.65]{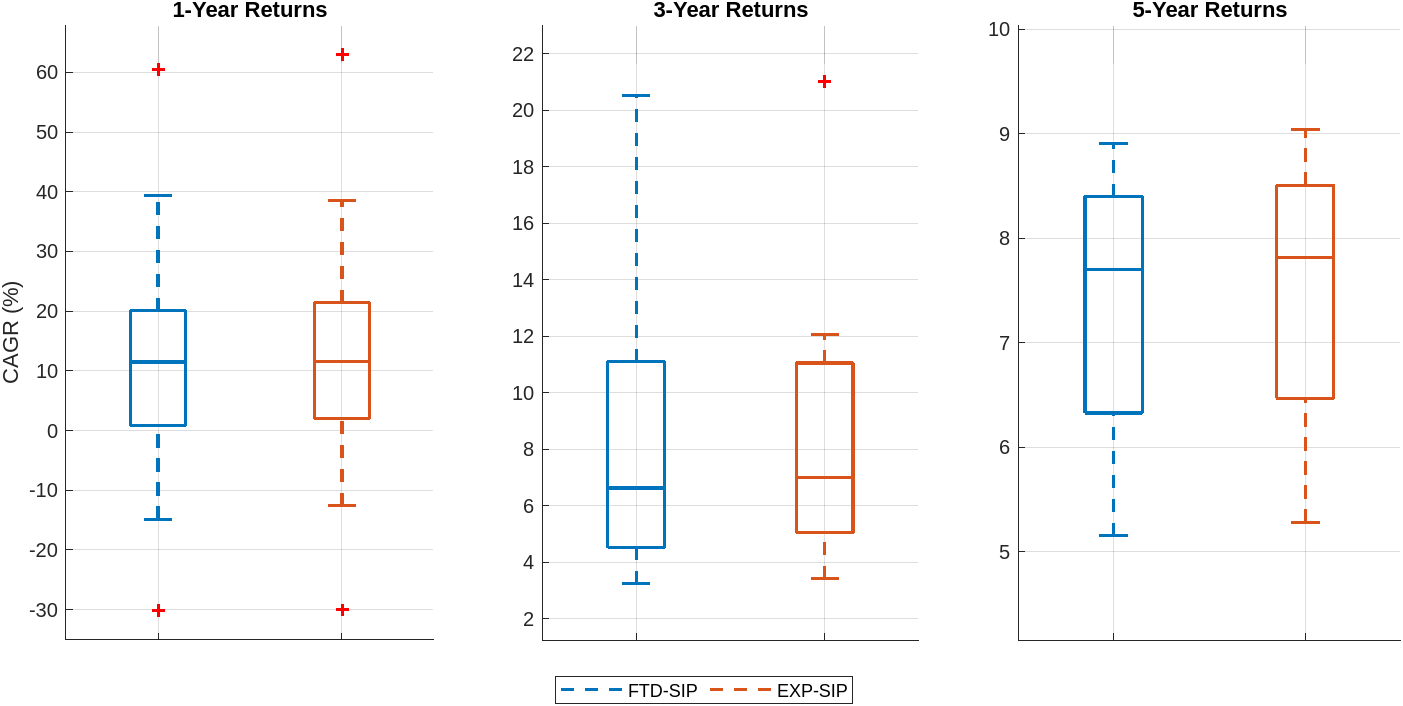}
\caption{ Distribution of SIP returns across investment horizons. Boxplots compare first trading day (FTD) and expiry-day (EXP) SIP performance for (Left) 1-year, (Middle) 3-year, and (Right) 5-year investment periods in Nifty 50 (2003–2024). Boxes show quartiles (25th–75th percentile), and outliers (+) are values beyond the whiskers. Blue: FTD-SIP, Orange: EXP-SIP.}
\label{fig_box} 
}
\end{figure}

\subsubsection{Distributional Analysis}
An analysis of the return distributions, presented in Figure~\ref{fig_box}, reveals a modest but persistent, multi-dimensional advantage for the expiry-day (EXP-SIP) strategy.
EXP-SIPs consistently deliver higher median returns than first-day SIPs (FTD-SIPs) across all evaluated horizons. However, this advantage is non-monotonic—peaking at 38 basis points for the 3-year horizon, and narrowing to 3 and 12 basis points for the 1- and 5-year periods, respectively. This pattern suggests that the timing edge linked to central tendency is most pronounced in the medium term.
A more consistent advantage lies in the EXP-SIP strategy's superior downside protection. The 25th percentile of its returns remains above that of the FTD-SIP across all horizons, with the largest margin of 111 basis points occurring in the volatile 1-year horizon. This implies improved resilience in adverse market phases, a key concern for risk-averse investors.
Regarding overall dispersion, the salient trend is the dramatic contraction of the interquartile range (IQR) for both strategies with increasing investment horizon, reflecting the volatility-smoothing effect of long-term investing. While EXP-SIPs exhibit marginally tighter dispersion in the 3- and 5-year periods, return variability is broadly similar between the two strategies.

In synthesis, although return distributions significantly overlap, the EXP-SIP demonstrates a structurally enhanced profile—offering non-monotonic but persistent median outperformance and consistently better downside protection. This improved risk-return characteristic provides a compelling rationale for adoption, plausibly linked to expiry-related liquidity and volatility dynamics in the Indian equity market.

%%
%%Start new subsection
\subsection{Statistical Validation of the Expiry-Day Advantage}

To formally assess the superiority of the EXP-SIP strategy over the FTD-SIP, we conduct a multilayered statistical analysis encompassing central tendency, effect sizes, distributional robustness, and utility-theoretic dominance. The consolidated results in Table \ref{tab:sip_results_summary} are analyzed thematically below, providing convergent evidence from multiple statistical perspectives.

\begin{table}[h!]
\centering
\rowcolors{2}{white}{gray!5}
\begin{tabular}{lccc}
\toprule
\rowcolor{gray!20}
\textbf{Metric} & \textbf{1-Year} & \textbf{3-Year} & \textbf{5-Year} \\
\midrule
\textbf{Sample Size (n)} & 22 & 7 & 4 \\
\rowcolor{lightgreen}
\textbf{Paired t-test (p-value)} & 0.0018 & 0.0149 & 0.0019 \\
\textbf{t-statistic} & 3.271 & 2.833 & 8.138 \\
\rowcolor{lightgreen}
\textbf{Wilcoxon Signed-Rank (p-value)} & 0.0035 & 0.0234 & 0.0625 \\
\textbf{Effect Size (Cohen's d)} & 0.697 (medium) & 1.071 (large) & 4.069 (very large) \\
\textbf{Effect Size (Hedges' g)} & 0.671 & 0.902 & 2.325 \\
\rowcolor{lightgreen}
\textbf{Bootstrap 95\% CI (Mean Diff)} & [0.301, 1.203] & [0.099, 0.443] & [0.095, 0.150] \\
\textbf{Bootstrap Point Estimate (E–F)} & 0.770 & 0.270 & 0.125 \\
\textbf{Stochastic Dominance: FSD} &  (KS p=0.6391) & (KS p=0.8424) & (KS p=0.7227) \\
\rowcolor{lightgreen}
\textbf{Stochastic Dominance (SSD)} & EXP-SIP SSD &  EXP-SIP SSD &  EXP-SIP SSD \\
\bottomrule
\end{tabular}
\caption{Comparison of SIP Performance Metrics across Investment Horizons}
\label{tab:sip_results_summary}
\end{table}

%%Newsubsubsection
\subsubsection{Tests of Central Tendency}
The superiority of the EXP-SIP strategy is confirmed through both parametric and non-parametric tests. The paired-sample $t$-test reveals that the mean returns are significantly higher across all investment horizons: 1-year ($p=0.0018$), 3-year ($p=0.0149$), and 5-year ($p=0.0019$). This finding is reinforced by the non-parametric Wilcoxon signed-rank test for the 1- and 3-year horizons. While the 5-year Wilcoxon test falls just short of conventional significance ($p=0.0625$), this is attributable to the low statistical power of the test with an extremely small sample ($n=4$). The strong and significant $t$-statistic for this horizon ($t=8.138$) suggests the mean advantage remains robust.

%%New subsubsection
\subsubsection{Effect Size, Economic Significance, and Robustness via Resampling}
The economic magnitude of the EXP-SIP advantage increases with investment horizon. The effect size, measured by Cohen's $d$, escalates from "medium" ($d=0.697$) at the 1-year horizon to "very large" ($d=4.069$) at the 5-year horizon, indicating a substantial real-world advantage that grows over time. This upward trend highlights how small, consistent monthly advantages in EXP-SIP returns compound over time into economically material gains—especially relevant for retail investors focused on long-term wealth accumulation.
The reliability of this effect is validated through non-parametric bootstrap resampling. The 95\% bias-corrected and accelerated (BCa) confidence intervals for the mean difference lie strictly above zero for all horizons, providing strong, assumption-free evidence of a positive return advantage. Notably, the intervals narrow with longer horizons (e.g., from [0.301, 1.203] at 1-year to [0.095, 0.150] at 5-year), reflecting increased statistical precision and reduced return volatility over time.  These results confirm that the observed advantage is not a sample artifact but holds under minimal assumptions.

\subsubsection{Utility-Theoretic Dominance}

From a decision-theoretic standpoint, the EXP-SIP strategy demonstrates a compelling advantage grounded in investor welfare. While first-order stochastic dominance (FSD) is not observed—primarily due to overlapping returns during brief periods of market distress—we establish conclusive \text{second-order stochastic dominance (SSD)} across all three investment horizons. This result is critically important, as it implies that any rational, risk-averse investor with a concave utility function would derive higher expected utility from the EXP-SIP return distribution.
Importantly, the strength of this dominance intensifies with the investment duration. At the 1-year horizon, the EXP-SIP strategy already satisfies SSD despite greater short-term volatility. By the 3-year mark, contracting return dispersion enhances the clarity of the dominance, and by 5 years, the distributions are so cleanly separated (as visualized in Figure~\ref{fig_box}) that the SSD condition holds across virtually the entire support. Such robust and strengthening dominance is rarely observed in empirical strategy comparisons. This finding provides compelling evidence that the EXP-SIP advantage is not only statistically significant but also economically meaningful within a utility-maximizing framework.
\subsection{Long-Term Horizons: The Boundary Conditions of the Timing Advantage}

Our analysis reveals a critical, horizon-dependent decay in the efficacy of the EXP-SIP strategy. While demonstrating robust outperformance in 1- to 5-year windows, this advantage systematically attenuates over longer timeframes.
For the two 10-year windows, the performance with EXP-SIP yields $6.83\% vs. 6.76\%$ and $7.04\% vs. 6.99\%$. On the other hand, 20-year horizon EXP-SIP yields $6.68\% vs. 6.65\%$ return.
With performance differences narrowing to economically negligible margins of 7, 5, and 3 basis points for the two 10-year and single 20-year periods in our dataset, see Table \ref{tab:result_10_20}. Though the limited sample size precludes formal statistical inference, the directional consistency of these results provides strong descriptive evidence of convergence.
\begin{table}[ht!]
\centering
\begin{tabular}{|c|c|c|r|r|>{\columncolor[HTML]{D9EAD3}}r|}
\hline
\rowcolor[HTML]{D9EAD3} 
\textbf{From Year} & \textbf{To Year} & \textbf{Years} & \textbf{CAGR (F)} & \textbf{CAGR (E)} & \textbf{Difference} \\
\hline
2005 & 2014 & 10 & 6.76 & 6.83 & \cellcolor[HTML]{D9F7BE}0.07 \\
2015 & 2024 & 10 & 6.99 & 7.04 & \cellcolor[HTML]{D9F7BE}0.05 \\
\hline
\hline
2005 & 2024 & 20 & 6.65 & 6.68 & \cellcolor[HTML]{D9F7BE}0.03\\
\hline
\end{tabular}
\caption{Long term SIP Investment Performance: First Day vs F\&O Expiry}
\label{tab:result_10_20}
\end{table}

This empirical pattern aligns precisely with fundamental financial principles. First, the \textit{Law of Large Numbers} suggests that the impact of monthly entry-point variations becomes asymptotically irrelevant as the number of investment periods grows. Second, the absorption of multiple complete market cycles (including the Global Financial Crisis and COVID-19 pandemic in our sample) creates an endogenous smoothing effect that neutralizes timing-specific advantages. Finally, any microstructure effects underlying the short-term advantage become mathematically insignificant when amortized across 120 to 240 monthly intervals.
This evidence establishes clear boundary conditions for the timing strategy. While economically meaningful for horizons under five years, the effect size systematically decays with investment duration, becoming practically irrelevant beyond a decade. The analysis highlights a fundamental shift from timing-sensitive to discipline-dependent wealth accumulation as investment horizons extend.
\section{Discussion}
Our empirical analysis provides two primary contributions with significant implications for both investment theory and practice. First, we document a robust, horizon-dependent timing advantage for expiry-day SIPs, challenging the notion that intra-month timing is irrelevant. Second, our long-term return analysis serves as a crucial reality check against the widely publicized but empirically unsupported myth of 12-15\% annualized returns from index fund SIPs in the Indian context.

\subsection{The Duality of the Timing Advantage: Interpretation and Evidence}
The core finding of this paper is the dual nature of the SIP timing effect: a potent advantage in the short-to-medium term that decays into irrelevance over multi-decade horizons. The robustness of the short-term advantage is validated by our comprehensive statistical framework. The superiority of the expiry-day strategy was confirmed with statistically significant paired $t$-test results across 1, 3, and 5-year horizons ($p < 0.05$ in all cases). This finding was corroborated by non-parametric Wilcoxon tests and assumption-free bootstrap analysis, which showed confidence intervals for the mean difference lying strictly above zero.
The economic magnitude of this effect, escalating to a Cohen's $d$ of over 4.0 for the 5-year period, underscores its practical relevance. Most critically, the finding of \textit{Second-Order Stochastic Dominance} (SSD) across all three horizons provides powerful, utility-theoretic proof that any rational, risk-averse investor would prefer the expiry-day strategy's return profile. We attribute this advantage to predictable market microstructure effects around derivatives expiry, such as institutional hedging, month-end rebalancing by funds, and systematic liquidity flows that create temporary price pressures.

However, as the investment horizon extends to 10 and 20 years, these small, monthly advantages are overwhelmed by the mathematical force of compounding and the averaging effect described by the Law of Large Numbers. The long-term convergence of returns underscores a foundational truth: while tactical timing can offer a discernible edge, it is the discipline of sustained participation that dominates long-term wealth creation.

\subsection{Deconstructing the "12\% Return" Myth: Biases and Real-World Frictions}
A central contribution of this study is to empirically challenge the 12-15\% annualized return benchmark frequently cited for index fund SIPs in India \cite{Financediary2024, StockInsideOut2024, ETMarkets2024}. Our data, spanning nearly two decades of market cycles from 2005 to 2024, reveals an actual pre-tax CAGR between 6.65\% and 6.68\%. This significant gap between perception and reality stems from three key distortions:

\begin{enumerate}
    \item \textit{Survivorship and Selective Period Bias:} Public narratives often anchor on exceptionally high-growth periods (e.g., 2003–2007) while downplaying subsequent stagnation or crises. This selective reporting parallels the survivorship and hindsight bias discussed in the literature on mutual fund performance illusion \cite{Brown1992 ,Wermers2000 }. When the full cycle is considered, the compounding effect of subsequent drawdowns sharply dilutes headline figures.

    \item \textit{Compounded Fee Drag:} Our frictionless estimates do not account for expense ratios (typically 1.1\% to 1.8\% annually for equity funds in India). Over decades, these fees create a substantial drag on compounded returns, a factor often overlooked in simplified projections.
    
    \item \textit{The Investor Behavior Gap:} As documented by \cite{Dichev2011}, investors often underperform the funds they invest in due to behavioral errors like panic selling or performance chasing. This "return gap" further reduces realized wealth.
\end{enumerate}

Factoring in these realities, a more realistic net, behaviorally-adjusted return expectation for a typical equity SIP would be closer to 5–6\%. This has profound implications for financial planning, retirement projections, and the fiduciary responsibility of investment advisors.

\subsection{Implications for Investor Welfare and Policy}
The inefficiencies identified—both the suboptimal default timing of SIPs and the inflated return expectations—have a substantial aggregate impact on investor welfare in India. 
With approximately 40 million active SIP accounts \cite{amfi_feb2024}, even marginal performance differences result in a cumulative wealth erosion of thousands of crores annually.
Rectifying this requires a shift in both industry practice and regulatory design. We propose three concrete interventions:

\begin{enumerate}
    \item Mandating that all mutual fund platforms and distributors offer flexible SIP timing options, including expiry-day dates, during investor onboarding.
    \item Requiring monthly expiry-related fund flow disclosures by AMCs to increase market transparency, analogous to existing FPI flow reports.
    \item Publishing official, standardized investor dashboards that report realistic SIP returns adjusted for fees, taxes, and inflation to better manage investor expectations.
\end{enumerate}

\subsection{Limitations and Future Research}
While this study offers key insights, it is subject to limitations that open avenues for future work. Our analysis is confined to the Nifty 50 large-cap index; it remains unclear if these timing effects generalize across different market caps (mid/small-cap) or sectors. Second, the findings are specific to India's market structure, and their applicability to developed markets with different derivatives cycles warrants investigation. Finally, future research could explore optimal contribution patterns (e.g., fixed vs. progressively increasing amounts) to further enhance long-term outcomes. Addressing these questions will require richer datasets and interdisciplinary approaches to inform more personalized and effective investment strategies.
\section{Conclusion}
This paper provides a comprehensive analysis of systematic investment timing in the Indian market, offering two primary contributions with significant implications for investors, advisors, and regulators. First, our research moves beyond the simplistic \textit{timing doesn't matter} narrative by empirically documenting a statistically and economically significant advantage for SIPs aligned with derivatives expiry days over short-to-medium-term horizons. Crucially, we establish clear \textit{boundary conditions} for this effect, demonstrating its systematic decay to irrelevance over multi-decade periods where the mathematical force of compounding and the discipline of continuous participation become the dominant drivers of returns.

Our second major contribution is an empirical correction to the widely marketed "12\% return myth." We demonstrate that realized, long-term, pre-tax index returns are closer to 7\%, and that this figure is further eroded by real-world frictions such as fees, biases, and investor behavior gaps. This finding exposes a critical disconnect between industry marketing and achievable outcomes, highlighting an urgent need for greater transparency.

The central implication of this work is a call for a data-driven evolution in investment practice. For the industry, it supports the design of smarter, timing-aware products and more realistic investor communication. For investors, the message is nuanced but clear: while a simple timing switch can provide a marginal, cost-free enhancement for shorter-term goals, the most critical determinant of long-term wealth creation is not the choice of day, but the unwavering discipline of consistent participation. Ultimately, while market microstructure may offer a fleeting edge, it is the steadfast commitment to staying invested that remains the most powerful and reliable strategy for building wealth.

\appendix
\section*{Appendix A: Monthly SIP Dates and Multi-Year CAGR Comparisons}
\begin{table}[ht!]
\centering
\begin{tabular}{|c|c|c|c|c|c|c|c|c|c|c|c|c|}
\hline
\textbf{Year} & \textbf{Jan} & \textbf{Feb} & \textbf{Mar} & \textbf{Apr} & \textbf{May} & \textbf{Jun} & \textbf{Jul} & \textbf{Aug} & \textbf{Sep} & \textbf{Oct} & \textbf{Nov} & \textbf{Dec} \\
\hline
2002 & & & & & & & & & & & & (26)\\
2003 & 1 (30) & 3 (27) & 3 (27) & 1 (24) & 2 (27) & 2 (24) & 1 (31) & 1 (28) & 1 (30) & 1 (30) & 3 (27) & 1 (29) \\
2004 & 1 (30) & 3 (25) & 1 (29) & 1 (30) & 3 (27) & 1 (30) & 1 (31) & 2 (29) & 1 (30) & 1 (29) & 1 (29) & 1 (31) \\
2005 & 1 (31) & 3 (28) & 1 (31) & 1 (29) & 2 (30) & 1 (29) & 1 (31) & 1 (31) & 1 (30) & 3 (28) & 1 (30) & 1 (31) \\
2006 & 1 (31) & 3 (27) & 1 (31) & 3 (27) & 2 (31) & 1 (30) & 3 (30) & 1 (31) & 1 (29) & 3 (27) & 1 (30) & 1 (31) \\
2007 & 1 (31) & 2 (26) & 3 (28) & 2 (30) & 3 (29) & 1 (30) & 2 (30) & 1 (31) & 3 (28) & 1 (31) & 1 (30) & 3 (29) \\
2008 & 1 (31) & 3 (28) & 1 (31) & 1 (30) & 2 (30) & 2 (30) & 1 (31) & 1 (31) & 1 (30) & 1 (31) & 3 (27) & 1 (31) \\
2009 & 1 (31) & 1 (28) & 3 (28) & 1 (30) & 3 (30) & 1 (30) & 1 (31) & 1 (31) & 1 (30) & 3 (28) & 1 (30) & 1 (31) \\
2010 & 3 (28) & 1 (28) & 3 (27) & 1 (30) & 1 (29) & 1 (30) & 1 (31) & 1 (31) & 2 (30) & 1 (30) & 1 (30) & 1 (31) \\
2011 & 1 (31) & 1 (28) & 2 (29) & 3 (30) & 3 (30) & 1 (30) & 1 (31) & 1 (31) & 1 (30) & 1 (30) & 2 (29) & 1 (31) \\
2012 & 1 (31) & 1 (28) & 2 (29) & 1 (30) & 1 (30) & 3 (30) & 1 (31) & 1 (31) & 1 (30) & 2 (29) & 1 (30) & 1 (31) \\
2013 & 1 (31) & 1 (28) & 1 (31) & 1 (30) & 3 (30) & 1 (30) & 1 (31) & 2 (30) & 1 (31) & 1 (30) & 1 (30) & 1 (31) \\
2014 & 1 (31) & 3 (28) & 1 (31) & 1 (30) & 2 (30) & 1 (30) & 1 (31) & 1 (30) & 3 (29) & 1 (30) & 1 (29) & 1 (31) \\
2015 & 3 (30) & 1 (29) & 1 (31) & 3 (27) & 1 (30) & 1 (29) & 1 (30) & 2 (30) & 1 (31) & 1 (30) & 3 (29) & 1 (31) \\
2016 & 1 (31) & 1 (28) & 2 (30) & 1 (30) & 1 (29) & 3 (30) & 1 (30) & 1 (31) & 3 (29) & 1 (30) & 1 (30) & 1 (31) \\
2017 & 1 (31) & 2 (29) & 3 (30) & 1 (30) & 3 (29) & 1 (30) & 1 (31) & 2 (30) & 1 (30) & 1 (31) & 1 (30) & 1 (31) \\
2018 & 1 (31) & 1 (28) & 1 (31) & 1 (30) & 2 (30) & 3 (30) & 1 (31) & 1 (31) & 1 (30) & 2 (30) & 1 (30) & 1 (31) \\
2019 & 3 (30) & 1 (28) & 2 (29) & 1 (30) & 1 (31) & 2 (30) & 1 (31) & 1 (30) & 1 (30) & 1 (30) & 1 (31) & 1 (30) \\
2020 & 1 (31) & 3 (28) & 1 (31) & 1 (30) & 2 (30) & 1 (30) & 1 (31) & 1 (31) & 3 (29) & 1 (30) & 1 (30) & 1 (31) \\
2021 & 1 (31) & 1 (29) & 3 (27) & 1 (30) & 2 (30) & 1 (30) & 1 (31) & 2 (30) & 1 (31) & 1 (31) & 1 (30) & 1 (31) \\
2022 & 1 (31) & 2 (29) & 3 (28) & 1 (30) & 2 (30) & 1 (30) & 1 (31) & 1 (31) & 2 (30) & 1 (31) & 1 (30) & 1 (31) \\
2023 & 3 (30) & 1 (28) & 2 (30) & 1 (31) & 1 (30) & 1 (31) & 2 (30) & 1 (30) & 1 (31) & 1 (30) & 1 (29) & 1 (31) \\
2024 & 1 (31) & 1 (28) & 1 (31) & 1 (30) & 2 (30) & 3 (30) & 1 (31) & 1 (31) & 2 (30) & 1 (31) & 1 (30) & 2 (\, \, ) \\
\hline
\end{tabular}
\caption{First trading date and F\&O expiry date (in parentheses) for each month}
\label{tab:ftd_exp}
\end{table}
\newpage

\begin{table}[ht!]
\centering
\begin{tabular}{|c|c|c|r|r|>{\columncolor[HTML]{D9EAD3}}r|}
\hline
\rowcolor[HTML]{D9EAD3} 
\textbf{From Year} & \textbf{To Year} & \textbf{Years} & \textbf{CAGR (F)} & \textbf{CAGR (E)} & \textbf{Difference} \\
\hline
2003 & 2003 & 1 & 60.43 & 62.94 & \cellcolor[HTML]{D9F7BE}2.51  \\
2004 & 2004 & 1 & 19.77 & 21.44 & \cellcolor[HTML]{D9F7BE}1.67  \\
2005 & 2005 & 1 & 27.40 & 29.16 & \cellcolor[HTML]{D9F7BE}1.76  \\
2006 & 2006 & 1 & 20.09 & 20.99 & \cellcolor[HTML]{D9F7BE}0.90  \\
2007 & 2007 & 1 & 39.40 & 38.56 & \cellcolor[HTML]{F7D9D9}-0.84 \\
2008 & 2008 & 1 & -30.08 & -29.99 & \cellcolor[HTML]{D9F7BE}0.09 \\
2009 & 2009 & 1 & 36.69 & 37.62 & \cellcolor[HTML]{D9F7BE}0.93 \\
2010 & 2010 & 1 & 13.85 & 14.76 & \cellcolor[HTML]{D9F7BE}0.91 \\
2011 & 2011 & 1 & -14.81 & -12.58 & \cellcolor[HTML]{D9F7BE}2.23 \\
2012 & 2012 & 1 & 11.80 & 9.85 & \cellcolor[HTML]{F7D9D9}-1.95  \\
2013 & 2013 & 1 & 6.99 & 9.47 & \cellcolor[HTML]{D9F7BE}2.48 \\
2014 & 2014 & 1 & 15.22 & 15.79 & \cellcolor[HTML]{D9F7BE}0.57  \\
2015 & 2015 & 1 & -4.60 & -3.89 & \cellcolor[HTML]{D9F7BE}0.71  \\
2016 & 2016 & 1 & 0.85 & 1.97 & \cellcolor[HTML]{D9F7BE}1.12 \\
2017 & 2017 & 1 & 11.21 & 11.90 & \cellcolor[HTML]{D9F7BE}0.69 \\
2018 & 2018 & 1 & 0.86 & 1.37 & \cellcolor[HTML]{D9F7BE}0.51  \\
2019 & 2019 & 1 & 6.70 & 6.60 & \cellcolor[HTML]{F7D9D9}-0.10  \\
2020 & 2020 & 1 & 29.62 & 28.51 & \cellcolor[HTML]{F7D9D9}-1.11 \\
2021 & 2021 & 1 & 10.64 & 11.16 & \cellcolor[HTML]{D9F7BE}0.52 \\
2022 & 2022 & 1 & 4.88 & 6.39 & \cellcolor[HTML]{D9F7BE}1.51  \\
2023 & 2023 & 1 & 16.39 & 17.29 & \cellcolor[HTML]{D9F7BE}0.90 \\
2024 & 2024 & 1 & 0.59 & 1.54 & \cellcolor[HTML]{D9F7BE}0.95 \\
\hline
\hline
2004 & 2006 & 3 & 20.53 & 21.02 & \cellcolor[HTML]{D9F7BE}0.49 \\
2007 & 2009 & 3 & 8.00 & 8.04 & \cellcolor[HTML]{D9F7BE}0.04 \\
2010 & 2012 & 3 & 3.25 & 3.42 & \cellcolor[HTML]{D9F7BE}0.17 \\
2013 & 2015 & 3 & 4.29 & 4.92 & \cellcolor[HTML]{D9F7BE}0.63\\
2016 & 2018 & 3 & 5.17 & 5.44 & \cellcolor[HTML]{D9F7BE}0.27 \\
2019 & 2021 & 3 & 12.15 & 12.06 & \cellcolor[HTML]{F7D9D9}-0.09  \\
2022 & 2024 & 3 & 6.63 & 7.01 & \cellcolor[HTML]{D9F7BE}0.38  \\
\hline
\hline
2005 & 2009 & 5 & 8.91 & 9.04 & \cellcolor[HTML]{D9F7BE}0.13  \\
2010 & 2014 & 5 & 7.50 & 7.66 & \cellcolor[HTML]{D9F7BE}0.16 \\
2015 & 2019 & 5 & 5.15 & 5.28 & \cellcolor[HTML]{D9F7BE}0.13  \\
2020 & 2024 & 5 & 7.89 & 7.98 & \cellcolor[HTML]{D9F7BE}0.09  \\
\hline
\end{tabular}
\caption{Multi-Year SIP Investment Performance: First Day vs F\&O Expiry}
\label{tab:result}
\end{table}
%%%

\newpage
\section*{Appendix B: MATLAB Code Snippet for Statistical Validation}
The following MATLAB code summarizes the analysis pipeline used:
\begin{verbatim}
% Example: Paired t-test and Cohen's d
[h, p, ~, stats] = ttest(CAGR_E, CAGR_F);
cohens_d = mean(CAGR_E - CAGR_F) / std(CAGR_E - CAGR_F);
% Wilcoxon Signed-Rank
[p_w, ~, stats_w] = signrank(CAGR_E, CAGR_F);
% Stochastic Dominance
[fE, xE] = ecdf(CAGR_E); [fF, xF] = ecdf(CAGR_F);
\end{verbatim}
All data and MATLAB code used for statistical analysis are available from the corresponding author upon reasonable request.

\section*{Acknowledgment}
The author sincerely thanks Dr.~Laxmi Priya Sahu (PhD, NIT Patna) for her valuable inputs during the drafting of this article. The author is also grateful to Ms.~Sushmita Kumari for her assistance in managing and organizing the dataset used in this study.

Special appreciation is extended to the organizers and participants of the \textit{7th International Conference on Financial Markets and Corporate Finance (ICFMCF 2025)} held at the Vinod Gupta School of Management, IIT Kharagpur. The conference offered critical feedback and expert discussion, which directly helped improve the depth and methodological rigor of this manuscript.
\section*{Declaration}
The author used ChatGPT solely for improving clarity, fluency, and grammar. All conceptual content, analysis, and written material were entirely developed and authored by the researcher.


\begin{thebibliography}{9}
\bibitem{Agapova2011} 
Agapova, A. (2011). Conventional mutual index funds versus exchange-traded funds. \textit{Journal of Financial Markets}, \textit{14}(2), 323–343. \url{https://doi.org/10.1016/j.finmar.2010.10.005}

\bibitem{amfi_feb2024} 
Association of Mutual Funds in India. (2025). \textit{Monthly SIP data – February 2025}. \url{https://www.amfiindia.com/mutual-fund}

\bibitem{amfi_historical} 

Association
 of Mutual 
Funds in India (2021). 
\textit{SIP trends report}. \url{https://www.amfiindia.com/research-information/mf-history}

\bibitem{Barber2009}
Barber, B. M., Odean, T., \& Zhu, N. (2009). Systematic noise.  Journal of Financial Markets,
Volume 12, Issue 4, Pages 547-569


\bibitem{Bodie2014} 
Bodie, Z., Kane, A., \& Marcus, A. J. (2014). \textit{Investments} (10th ed.). McGraw-Hill Education.

\bibitem{Bollen2001} 
Bollen, N. P. B., \& Busse, J. A. (2001). On the timing ability of mutual fund managers. \textit{The Journal of Finance}, \textit{56}(3), 1075–1094. \url{https://doi.org/10.1111/0022-1082.00355}

\bibitem{Brown1992}
Brown, S. J., Goetzmann, W. N., Ibbotson, R. G., \& Ross, S. A. (1992). Survivorship bias in performance studies. \textit{Review of Financial Studies}, \textit{5}(4), 553--580. https://doi.org/10.1093/rfs/5.4.553


\bibitem{Budish2015} 
Budish, E., Cramton, P., and Shim, The High-Frequency Trading Arms Race: Frequent Batch Auctions as a Market Design Response , The Quarterly Journal of Economics, Volume 130, Issue 4, November 2015, Pages 1547–1621, https://doi.org/10.1093/qje/qjv027



\bibitem{Carhart1997}
Carhart, M. M. (1997). On Persistence in Mutual Fund Performance, The Journal of Finance, Vol. 52, No. 1, March 1997, pp. 57–82
https://doi.org/10.1111/j.1540-6261.1997.tb03808.x

\bibitem{Cohen1988}
Cohen, J. (1988). Statistical Power Analysis for the Behavioral Sciences (2nd ed.).  Routledge. https://doi.org/10.4324/9780203771587

\bibitem{Constantinides1979}
Constantinides, G. M. (1979). A Note on the Suboptimality of Dollar-Cost Averaging as an Investment Policy. The Journal of Financial and Quantitative Analysis, 14(2), 443–450. https://doi.org/10.2307/2330513


\bibitem{Cont2001}
Cont, R. (2001). Empirical properties of asset returns: stylized facts and statistical issues. Quantitative Finance, 1(2), 223–236. https://doi.org/10.1080/713665670

\bibitem{Coval2007}
Coval, J. D., \& Stafford, E. (2007). Asset Fire Sales (and Purchases) in Equity Markets. Journal of Financial Economics, 86(2), 479–512.

\bibitem{Damodaran2007} 
Damodaran, A. (2007). \textit{Strategic risk taking: A framework for risk management}. Pearson Education.

\bibitem{Damodaran2012} 
Damodaran, A. (2012). \textit{Investment valuation: Tools and techniques for determining the value of any asset} (3rd ed.). Wiley.

\bibitem{Davidson2000}
Davidson, R., \& Duclos, J.-Y. (2000). Statistical Inference for Stochastic Dominance and for the Measurement of Poverty and Inequality. Econometrica, 68(6), 1435–1464. http://www.jstor.org/stable/3003995


\bibitem{Dichev2011}
Dichev, I. D., \& Yu, G. (2011). The dark side of trading. \textit{Journal of Accounting, Economics and Finance}, \textit{58}(1), 1--19. https://doi.org/10.1016/j.jacceco.2011.01.001


\bibitem{Edelen2016}
Edelen, R. M., Ince, O. S., \& Kadlec, G. B. (2016).
Institutional Investors and Stock Return Anomalies. Journal of Financial Economics, 119(3), 472–488.

\bibitem{Efron1986}
Efron, B. (1987). Better Bootstrap Confidence Intervals. Journal of the American Statistical Association, 82(397), 171–185. https://doi.org/10.1080/01621459.1987.10478410

\bibitem{ETMarkets2024} 
Economic Times. (2024, April). 20 times return in 20 years: Nifty trumps S\&P 500, gold, other asset classes. \url{https://economictimes.indiatimes.com/markets/stocks/news/20-times-return-in-20-years-nifty-trumps-sp-500-gold-other-asset-classes/sp-500/slideshow/112513320.cms}

\bibitem{Fama1970} 
Fama, E. F. (1970). Efficient capital markets: A review of theory and empirical work. \textit{The Journal of Finance}, \textit{25}(2), 383–417. \url{https://doi.org/10.2307/2325486}

\bibitem{Financediary2024} 
Finance Diary. (2024). \textit{Nifty 50 returns in last 20 years: Detailed analysis}. \url{https://financediary.in/nifty-50-returns-in-last-20-years/}

\bibitem{FundsIndiaAbsoluteReturn} 
FundsIndia Research. (2021). SIP returns: CAGR or XIRR – what works? \textit{FundsIndia}. \url{https://blog.fundsindia.com/blog/mutual-funds/sip-returns-cagr-or-xirr-what-works/24069}

\bibitem{Hedges1981}
Hedges, L. V. (1981). Distribution Theory for Glass’s Estimator of Effect size and Related Estimators. Journal of Educational Statistics, 6(2), 107-128. https://doi.org/10.3102/10769986006002107 

\bibitem{Kahneman1979} 
Kahneman, D., \& Tversky, A. (1979). Prospect theory: An analysis of decision under risk. \textit{Econometrica}, \textit{47}(2), 263–291. \url{https://doi.org/10.2307/1914185}

\bibitem{Malkiel1989} 
Malkiel, B. G. (1989). \textit{A random walk down Wall Street}. W.W. Norton \& Company.

\bibitem{MomentumEffect2020} 
Imran, Z. A., Wong, W.-C., \& Ismail, R. (2020). Momentum effect in developed and emerging stock markets. \textit{Journal of Finance and Accounting Research}, \textit{2}(2), 14–31. \url{https://doi.org/10.32350/JFAR/0202/02}


\bibitem{Ni2005}
Ni, S. X., Pearson, N. D., \& Poteshman, A. M. (2005). Stock price clustering on option expiration dates. Journal of Financial Economics, 78(1), 49-87. https://doi.org/10.1016/j.jfineco.2004.08.005

\bibitem{nifty_methodology} 
NSE India. (2023). \textit{Nifty 50 index – Methodology document}. \url{https://www.nseindia.com/products-services/indices-nifty50-index}

\bibitem{NiftyHistoricalData} 
NSE India. (n.d.). Nifty indices – Historical data. Retrieved February 6, 2025, from \url{https://www.niftyindices.com/reports/historical-data}



\bibitem{NSEHistory} 
National Stock Exchange of India. (n.d.). History \& milestones. Retrieved February 10, 2025, from \url{https://www.nseindia.com/national-stock-exchange/history-milestones}

\bibitem{Odean1998}
Odean, T. (1998). Are Investors Reluctant to Realize Their Losses?
The Journal of Finance, 53(5), 1775–1798.
https://doi.org/10.1111/0022-1082.00072

\bibitem{RBI2022} 
Reserve Bank of India. (2022). \textit{India's economic growth: Trends and prospects}. \url{https://www.rbi.org.in/Scripts/AnnualReportPublications.aspx}



\bibitem{Roy2011} 
Roy, D., \& Ghosh, K. (2011). The scenario of investment in systematic investment plan (SIP) among retail customers. \textit{Global Journal of Finance and Economic Management}, \textit{1}(1), 49–62.

\bibitem{SEBI2023} 
Securities and Exchange Board of India. (2023). \textit{Retail investor behavior survey}.



\bibitem{JaneStreet2025}Securities and Exchange Board of India. (2025, July). Interim order in the matter of index manipulation by Jane Street Group. SEBI.\url{ https://www.sebi.gov.in/enforcement/orders/jul-2025/interim-order-in-the-matter-of-index-manipulation-by-jane-street-group_95040.html}



\bibitem{Sharpe1999} 
Sharpe, W. F., Alexander, G. J., \& Bailey, J. V. (1999). \textit{Investments} (6th ed.). Prentice Hall.

\bibitem{StockInsideOut2024} 
Stock Inside Out. (2024). \textit{Nifty 50 historical returns: Yearly, monthly, daily CAGR}. \url{https://stockinsideout.com/nifty-50-historical-returns/}

\bibitem{Stoll1991}
Stoll, H. R., \& Whaley, R. E. (1991). Expiration-Day Effects: What Has Changed? Financial Analysts Journal, 47(1), 58–72. http://www.jstor.org/stable/4479396

\bibitem{Venkataramani2021} 
Venkataramani, R., \& Kayal, P. (2021). Systematic investment plans vs market-timed investments. \textit{Macroeconomics and Finance in Emerging Market Economies}, \textit{16}(1), 157–176. \url{https://doi.org/10.1080/17520843.2021.1969086}

\bibitem{Vipul2005} 
Vipul. (2005). Futures and options expiration-day effects: The Indian evidence. \textit{The Journal of Futures Markets}, \textit{25}(11), 1045–1065. \url{https://doi.org/10.1002/fut.20178}

\bibitem{Virgilio2019} 
Virgilio, G. P. M. (2019). High-frequency trading: A literature review. \textit{Financial Markets and Portfolio Management}, \textit{33}(2), 183–208. \url{https://doi.org/10.1007/s11408-019-00331-6}
\bibitem{Wermers2000} 
Wermers, R. (2000). Mutual fund performance: An empirical decomposition into stock-picking talent, style, transactions costs, and expenses. *The Journal of Finance*, *55*(4), 1655–1695. https://doi.org/10.1111/j.1540-6261.2000.tb00672.x

\bibitem{WikipediaIRR} 
Wikipedia contributors. (2024). Internal rate of return. \textit{Wikipedia}. \url{https://en.wikipedia.org/wiki/Internal_rate_of_return}

\bibitem{Wilcox2017}
Wilcox, R. R. (2017). Introduction to robust estimation and hypothesis testing (4th ed.). Academic Press.

\bibitem{WorldBank2021} 
World Bank. (2021). \textit{India's growing middle class: Drivers and characteristics}. \url{https://www.worldbank.org/en/country/india}

\end{thebibliography}
\end{document}